\newif\ifhyper\IfFileExists{hyperref.sty}{\hypertrue}{\hyperfalse}
\ifhyper\usepackage{hyperref}\fi
\def\colorful{1}
\newcommand{\new}[1]{{\color{red} #1}}
\newcommand{\toremove}[1]{\new{\st{#1}}}
\newcommand{\new}[1]{{#1}}
\newcommand{\toremove}[1]{}
\newtheorem{theorem}{Theorem}[section]
\newtheorem{lemma}[theorem]{Lemma}
\newtheorem{claim}[theorem]{Claim}
\newtheorem{fact}[theorem]{Fact}
\theoremstyle{definition}
\newcommand{\R}{\mathbb{R}}
\newcommand{\E}{\mathbf{E}}
\newcommand{\Var}{\mathbf{Var}}
\newcommand{\Dom}{\mathbf{\Omega}}
\newcommand{\U}{\mathbf{u}}
\newcommand{\C}{\mathcal{C}_U}
\newcommand{\F}{\mathbf{F}}
\algnewcommand\INPUT{\item[{\textbf {input:}}]}
\algnewcommand\OUTPUT{\item[{\textbf{output:}}]}
\newcommand{\dtv}{d_{\mathrm TV}}
\newcommand{\wh}[1]{{\widehat{#1}}}
\newcommand{\ignore}[1]{}
\newcommand{\eps}{\epsilon}
\newcommand{\Poi}{\mathop{\textnormal{Poi}}\nolimits}
\newcommand{\eqdef}{\stackrel{{\mathrm {\footnotesize def}}}{=}}
\newcommand{\littlesum}{\mathop{\textstyle \sum}}
\title{Sharp Bounds for Generalized Uniformity Testing}
\author{
Ilias Diakonikolas\thanks{Supported by NSF Award CCF-1652862 (CAREER) and a Sloan Research Fellowship.}\\
University of Southern California\\
{\tt diakonik@usc.edu}\\
\and
Daniel M. Kane\thanks{Supported by NSF Award CCF-1553288 (CAREER) and a Sloan Research Fellowship.}\\
University of California, San Diego\\
{\tt dakane@cs.ucsd.edu}\\
\and
Alistair Stewart\\ University of Southern California\\
{\tt alistais@usc.edu}
}
\begin{document}

\maketitle

\begin{abstract}
We study the problem of {\em generalized uniformity testing}~\cite{BC17} of a discrete probability distribution:
Given samples from a probability distribution $p$ over an {\em unknown} discrete domain $\Dom$,
we want to distinguish, with probability at least $2/3$, between the case that $p$
is uniform on some {\em subset} of $\Dom$ versus $\eps$-far,
in total variation distance, from any such uniform distribution.

We establish tight bounds on the sample complexity of generalized uniformity testing.
In more detail, we present a computationally efficient tester
whose sample complexity is optimal, up to constant factors,
and a matching information-theoretic lower bound.
Specifically, we show that the sample complexity of generalized uniformity testing is
$\Theta\left(1/(\eps^{4/3}\|p\|_3) + 1/(\eps^{2} \|p\|_2) \right)$.
\end{abstract}

\thispagestyle{empty}
\setcounter{page}{1}


\section{Introduction}  \label{sec:intro}


Consider the following statistical task: Given independent samples from a distribution over an {\em unknown} discrete domain $\Dom$,
determine whether it is uniform on some {\em subset} of the domain versus significantly different from any such uniform distribution.
Formally, let $\C \eqdef \{\U_S: S \subseteq \Dom\}$ denote the set of uniform distributions $\U_S$ over subsets $S$ of $\Dom$.
Given sample access to an unknown distribution $p$ on $\Dom$ and a proximity parameter
$\eps>0$, we want to correctly distinguish between the case that $p \in \C$ versus $\dtv(p,  \C) \eqdef \min_{S \subseteq \Dom} \dtv(p, \U_S) \geq \eps$,
with probability at least $2/3$. Here, $\dtv(p, q) = (1/2) \|p-q\|_1$ denotes the total variation distance between distributions
$p$ and $q$. This natural problem, termed {\em generalized uniformity testing}, was recently studied by Batu and Canonne~\cite{BC17},
who gave the first upper and lower bounds on its sample complexity.

Generalized uniformity testing bears a strong resemblance to the familiar task of {\em uniformity testing},
where one is given samples from a distribution $p$ on an {\em explicitly known} domain of size $n$
and the goal is to determine, with probability at least $2/3$,
whether $p$ is the uniform distribution $\U_n$ on this domain versus $\dtv(p, \U_n) \geq \eps$.
Uniformity testing is arguably {\em the} most extensively studied problem in distribution property testing~\cite{GR00, Paninski:08, VV14, DKN:15,
Goldreich16, DiakonikolasGPP16, DGPP17} and its sample complexity is well understood.
Specifically, it is known~\cite{Paninski:08, CDVV14, VV14, DKN:15}
that $\Theta(n^{1/2}/\eps^2)$ samples are necessary and sufficient for this task.

The field of {\em distribution property testing}~\cite{BFR+:00} has seen substantial progress in the past decade,
see~\cite{Rub12, Canonne15} for two recent surveys.
A large body of the literature has focused on characterizing the sample size needed to test properties
of arbitrary distributions of a {\em given} support size. This regime is fairly well understood:
for many properties of interest there exist sample-efficient testers
~\cite{Paninski:08, CDVV14, VV14, DKN:15, ADK15, CDGR16, DK16, DiakonikolasGPP16, CDS17, DGPP17}.
Moreover, an emerging body of work has focused on leveraging a priori structure
of the underlying distributions to obtain significantly improved samples
complexities~\cite{BKR:04, DDSVV13, DKN:15, DKN:15:FOCS, CDKS17, DaskalakisP17, DaskalakisDK16, DKN17}.

Perhaps surprisingly, the natural setting where the distribution is arbitrary on a discrete but unknown domain (of unknown size)
does not seem to have been explicitly studied before the recent work of Batu and Canonne~\cite{BC17}.
Returning to the specific problem studied here, at first sight it might seem that generalized uniformity testing
and uniformity testing are essentially the same task. However, as shown in~\cite{BC17}, the sample complexities of these
two problems are significantly different. Specifically, \cite{BC17} gave a generalized uniformity tester with
expected sample complexity $O(1/(\eps^{6} \|p\|_{3}))$ and showed a lower bound of $\Omega(\|p\|_3)$.
Since generalized uniformity is a symmetric property, any tester should essentially rely on the empirical moments
(collision statistics) of the distribution~\cite{RRSS09, PV11sicomp}. The algorithm in~\cite{BC17} uses sufficiently accurate approximations of
the second and third moments of the unknown distribution. Their lower bound formalizes
the intuition that an approximation of the third norm is in some sense necessary to solve this problem.

\subsection{Our Results and Techniques} \label{ssec:techniques}

An immediate open question arising from the work of~\cite{BC17} is to precisely characterize the sample complexity
of generalized uniformity testing, as a function of all relevant parameters. The main result of this paper
provides an answer to this question. In particular, we show the following:

\begin{theorem}[Main Result] \label{thm:main}
There is an algorithm with the following performance guarantee: Given sample access to an arbitrary
distribution $p$ over an unknown discrete domain $\Dom$
and a parameter $0< \eps <1$, the algorithm uses $O\left(1/(\eps^{4/3}\|p\|_3) + 1/(\eps^{2} \|p\|_2) \right)$ 
independent samples from $p$ in expectation, and distinguishes between the case $p \in \C$ versus $\dtv(p, \C) \geq \eps$
with probability at least $2/3$. Moreover, 
for every $0< \eps < 1/0$ and $n >1$, any algorithm that distinguishes between $p \in \C$ 
and $\dtv(p, \C) \geq \eps$ requires at least $\Omega(n^{2/3}/\eps^{4/3}+n^{1/2}/\eps^2)$ samples, 
where $p$ is guaranteed to have $\|p\|_3 = \Theta(n^{-2/3})$ and $\|p\|_2=\Theta(n^{-1/2})$.
\end{theorem}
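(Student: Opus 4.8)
Both the algorithm and the lower bound are built around the moment characterization of $\C$: by Cauchy--Schwarz, $\|p\|_2^2 = \sum_i p_i^{1/2}\cdot p_i^{3/2} \le \|p\|_1^{1/2}\|p\|_3^{3/2} = \|p\|_3^{3/2}$, with equality precisely when the $p_i$ are constant on the support, so $p\in\C$ iff $\|p\|_3^3 = \|p\|_2^4$. The quantitative heart of the upper bound is a \emph{stability} version: I would prove that $\dtv(p,\C)\ge\eps$ forces $\|p\|_3^3 \ge (1+c\,\eps^2)\|p\|_2^4$ for an absolute constant $c>0$; the exponent $\eps^2$ is tight, as witnessed by the distribution on $n$ points with $n/2$ masses $(1+2\eps)/n$ and $n/2$ masses $(1-2\eps)/n$, which is $\eps$-far from $\C$ and has $\|p\|_3^3/\|p\|_2^4 = 1+\Theta(\eps^2)$. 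To prove the lemma I would take $S$ to be a near-optimal threshold set for $p$, deduce from $\dtv(p,\U_S)\ge\eps$ that the masses have $\Omega(\eps)$ relative spread about $1/|S|$, and convert this into a lower bound on $\sum_ip_i^3 - (\sum_ip_i^2)^2$ by a robust form of the Cauchy--Schwarz equality case, handling the presence of a dominant heavy coordinate (which only makes the gap larger) separately from the balanced case.

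Granting this, the tester estimates $A\eqdef\|p\|_2^2$ from the two-way sample collisions and $B\eqdef\|p\|_3^3$ from the three-way collisions --- both unbiased $U$-statistics --- and accepts iff $\wh{B} \le \wh{A}^2(1+\tfrac c2\eps^2)$. Completeness holds since $B=A^2$ exactly on $\C$; for soundness the lemma yields $B\ge(1+c\eps^2)A^2$, so it suffices to know $A$ to relative error $o(\eps^2)$ and $B$ to additive error $o(\eps^2 A^2)$. Since $A^2\asymp B$ on the relevant instances, a variance computation shows $m=\Omega(1/(\eps^2\|p\|_2))$ samples suffice for the first and $m=\Omega(1/(\eps^{4/3}\|p\|_3))$ for the second. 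Because $\|p\|_2$ and $\|p\|_3$ are unknown, this is run in a guess-and-double loop: with a current guess $m$, draw $m$ samples, spend an initial $O(1/\|p\|_2 + 1/\|p\|_3)$ of them on constant-factor estimates of the two norms, test whether $m$ meets the resulting requirement, and otherwise double $m$; the expected sample count telescopes to $O(1/(\eps^{4/3}\|p\|_3) + 1/(\eps^2\|p\|_2))$.

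The step I expect to be the main obstacle is controlling the variance of the three-way collision statistic uniformly over $p$. On a Poissonized sample this variance is $O(B/m^3 + \|p\|_4^4/m^2)$ and $\|p\|_4^4 \le \|p\|_\infty \|p\|_3^3$, so a coordinate much heavier than the typical mass $\asymp\|p\|_2^2$ inflates the second term. The remedy is a preliminary phase that detects coordinates seen an atypically large number of times and treats them apart from the collision statistics, using that (i) a genuinely dominant coordinate of mass $h$ already certifies $\dtv(p,\C)=\Omega(h)$ and a gap $B-A^2 = \Omega(h^3)$, far above the $\eps$-dependent threshold and so cheaply detected, and (ii) on the flat residual $\|p\|_\infty = O(1/m)$, which makes $\|p\|_4^4/m^2$ dominated (up to constants) by $B/m^3$ --- the very quantity the claimed budget controls. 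Pinning down the right threshold and correctly accounting for the removed coordinates' contributions to $A$ and $B$, so that neither completeness nor the sample bound degrades, is the delicate part.

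For the lower bound we produce, for each term, a distribution in $\C$ and a distribution at distance $\ge\eps$ from $\C$, both with $\|p\|_3=\Theta(n^{-2/3})$ and $\|p\|_2=\Theta(n^{-1/2})$, that are indistinguishable from $o(\cdot)$ samples; as generalized uniformity is symmetric, a tester may be assumed to see only the sample's fingerprint, and after Poissonization it is enough to bound the total variation between the two fingerprint laws. For the term $n^{1/2}/\eps^2$ (dominant when $n\lesssim\eps^{-4}$) we reuse the Paninski construction --- $\U_n$ versus the distribution perturbing each pair $\{2i-1,2i\}$ to masses $(1\pm\eps)/n$ with a random sign, which is $\Theta(\eps)$-far from every $\U_S$ --- and the standard $\chi^2$/mutual-information argument gives the bound. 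For the term $n^{2/3}/\eps^{4/3}$ (dominant when $n\gtrsim\eps^{-4}$) we take the extremal instance above against $\U_k$ with $k=n/(1+4\eps^2)$, chosen so the two have \emph{exactly} equal $\ell_2$ norm --- hence identical expected two-way collision counts and ``seen-once'' fingerprint entries --- while their $\ell_3^3$ norms differ by only $\Theta(\eps^2/n^2)$. Tracing this through the fingerprint, the only statistic with any distinguishing power is the three-way collision count $F_3$, approximately Poisson of mean $\Theta(m^3/n^2)$ in both cases with means differing by $\Theta(\eps^2 m^3/n^2)$; since Poissons of means $\lambda$ and $\lambda+\delta$ are $O(\delta/\sqrt\lambda)$-close in total variation, this is $o(1)$ once $m=o(n^{2/3}/\eps^{4/3})$, while $F_1,F_2$ and the $F_j$ with $j\ge 4$ (negligible in this regime) add nothing. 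The chief obstacle here is making that indistinguishability rigorous: controlling the joint law of $(F_1,F_2,F_3,\dots)$, not each $F_j$ alone, and bounding the Poissonization and Poisson-approximation errors.
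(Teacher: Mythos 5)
Your large-$\eps$ branch matches the paper's: the same characterization $p\in\C\iff \F_3(p)=\F_2^2(p)$, the same quantitative stability lemma ($\dtv(p,\C)\ge\eps \Rightarrow \F_3(p)-\F_2^2(p)=\Omega(\eps^2\F_2^2(p))$), and the same idea of pre-screening higher moments so that the collision estimators for $\F_2,\F_3$ have improved variance. However, there is a genuine gap in your upper bound: the single moment-comparison tester cannot deliver the $O(1/(\eps^2\|p\|_2))$ term. Writing $n\asymp 1/\|p\|_2^2$, your test needs $\wh{\F}_2(p)$ to relative accuracy $O(\eps^2)$, and the variance of the collision estimator is $\Theta(m^{-2}\F_2(p)+m^{-1}\F_3(p))$. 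On near-uniform instances the second term forces $m=\Omega(\eps^{-4})$ \emph{regardless of} $n$, and for $\eps< n^{-1/4}$ one has $\eps^{-4}> n^{1/2}/\eps^2$, so your claim that $m=\Omega(1/(\eps^2\|p\|_2))$ samples suffice for estimating $A$ fails exactly in the regime where that term is supposed to dominate the sample complexity. Your heavy-element removal does not help here: the offending term is $\F_3(p)/m$, which is already at its minimum for uniform $p$. The paper resolves this by switching to an entirely different algorithm for $\eps<n^{-1/4}$: draw $\Theta(n)$ (or $\Theta(n\log n)$) samples to recover a set $S$ containing most of the mass, then run a standard uniformity tester on the conditional distribution $(p|S)$ with $O(n^{1/2}/\eps^2)$ samples; the nontrivial step is a concentration argument (the paper's Lemma~\ref{lem:restriction}) showing that if $p$ is $\eps$-far from $\C$ then $(p|S)$ remains $\Omega(\eps)$-far from $\U_S$ even though $p(S)$ is only $\Omega(1)$. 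This second algorithm, and that lemma, are missing from your proposal and cannot be dispensed with.

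On the lower bound, your $n^{1/2}/\eps^2$ term via Paninski agrees with the paper. For the $n^{2/3}/\eps^{4/3}$ term you propose fixed distributions with matched $\ell_2$ norms and a direct fingerprint/approximate-Poisson argument for $F_3$; the paper instead uses the mutual-information method of~\cite{DK16} with i.i.d.\ random bin masses whose first two moments match between the YES and NO ensembles, reducing everything to a per-bin bound $I(X;A_1)=o(1/N)$. Your route has the right intuition (all distinguishing power sits at the third moment), but the step you flag as the "chief obstacle" --- controlling the joint law of the whole fingerprint rather than $F_3$ alone --- is precisely the content of the proof, and your sketch does not supply it; the i.i.d.-bins construction is what makes this tractable. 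As written, both halves of your argument therefore have an unresolved core step, the more serious being the absent small-$\eps$ algorithm.
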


In the following paragraphs, we provide an intuitive explanation of our algorithm
and our matching sample size lower bound, in tandem with a comparison to the prior work~\cite{BC17}.

\paragraph{Sample-Optimal Generalized Uniformity Tester.}
Our algorithm requires considering two cases based on the relative
size of $\epsilon$ and $\|p\|_2^2$. This case analysis seems somewhat intrinsic to the problem
as the correct sample complexity branches into these cases.

For large $\epsilon$, we use the same overall technique as~\cite{BC17}, noting that
$p$ is uniform if and only if $\|p\|_3 = \|p\|_2^{4/3}$, and that for $p$ far
from uniform, $\|p\|_3$ must be substantially larger. The basic idea from
here is to first obtain rough approximations to $\|p\|_2$ and $\|p\|_3$ in
order to ascertain the correct number of samples to use, and then use
standard unbiased estimators of $\|p\|_2^2$ and $\|p\|_3^3$ to approximate
them to appropriate precision, so that their relative sizes can be
compared with appropriate accuracy.

We improve upon the work of~\cite{BC17}
in this parameter regime in a couple of ways. First, we obtain more precise
lower bounds on the difference $\|p\|_3^3 -\|p\|_2^4$ in the case where $p$ is far from uniform
(Lemma~\ref{lem:struct-unif}). This allows us to reduce the accuracy
needed in estimating $\|p\|_2$ and $\|p\|_3$. Second, we refine the
method used for performing the approximations to these moments ($\ell_r$-norms).
In particular, we observe that using the generic estimators for these quantities
yields sub-optimal bounds for the following reason:
The error of the unbiased estimators is related to their variance,
which in turn can be expressed in terms of the higher moments of $p$ (Fact~\ref{fact:collisions-moments}).
This implies for example that the worst case sample complexity
for estimating $\|p\|_3$ comes when the fourth and fifth moments of $p$ are large.
However, since we are trying to test for the case of uniformity (where these
higher moments are minimal), we do not need to worry about this worst case.
In particular, after applying sample efficient tests to ensure that the higher moments of $p$
are not much larger than expected (Lemma~\ref{lem:power-sum} (ii)), 
the standard estimators for the second and third moments
of $p$ can be shown to converge more rapidly than they would 
in the worst case (Lemma~\ref{lem:approx-F3-F2sq}).

The above algorithm is not sufficient for small values of $\epsilon$.
For $\eps$ sufficiently small, we employ a different, perhaps more
natural, algorithm. Here we take $m$ samples (for $m$ appropriately chosen
based on an approximation to $\|p\|_2$) and consider the subset $S$ of the domain
that appears in the sample. We then test whether the conditional distribution $p$ on $S$
is uniform, and output the answer of this tester.
The number of samples $m$ drawn in the first step is sufficiently large so that
$p(S)$, the probability mass of $S$ under $p$, is relatively high.
Hence, it is easy to sample from the conditional distribution using rejection sampling.
Furthermore, we can use a standard uniformity testing algorithm requiring $O(\sqrt{|S|}/\eps^2)$ samples.

To establish correctness of this algorithm, we need to show that if $p$ is far from uniform, then
the conditional distribution $p$ on $S$ is far from uniform as well. To prove this statement,
we distinguish two further subcases. If $\eps$ is ``very small'', then we can afford to set $m$
sufficiently large so that $p(S)$ is at least $1-\eps/10$. In this case, our claim follows straightforwardly.
For the remaining values of $\eps$, we can only guarantee that $p(S) = \Omega(1)$, hence we
require a more sophisticated argument. Specifically, we show (Lemma~\ref{lem:restriction}) 
that for any $x$ in an appropriate interval,
with high constant probability, the random variable $Z(x) = \sum_{i \in S} |p_i-x|$ is large.
It is not hard to show that this holds with high probability for each fixed $x$,
as $p$ being far from uniform implies that $\sum_{i \in \Dom} \min(p_i,|p_i-x|)$ is large.
This latter condition can be shown to provide a clean lower bound
for the expectation of  $Z(x)$. To conclude the argument,
we show that $Z(x)$ is tightly concentrated around its expectation.

\paragraph{Sample Complexity Lower Bound.}
The lower bound of $\Omega(1/(\eps^2 \|p\|_2))$ follows directly from the standard 
lower bound of $\Omega(n^{1/2}/\eps^2)$~\cite{Paninski:08} 
for uniformity testing on a given domain of size $n$. Specifically, it is implied from the fact
that the hard instances satisfy $\|p\|_2  = \Theta(n^{-1/2})$. The other branch of the lower bound, 
namely $\Omega(1/(\eps^{4/3} \|p\|_{3}))$, is more involved. To prove this lower bound, we use the shared information
method~\cite{DK16} for the following family of hard instances: In the ``YES'' case, 
we consider the distribution over (pseudo-)distributions on $N$ bins, where each $p_i$ is $(1 + \eps^2)/N$
with probability $n/(N(1+\eps^2))$, and $0$ otherwise. (Here we assume that the parameter 
$N$ is sufficiently large compared to the other parameters.)
In the ``NO'' case, we consider the distribution over (pseudo-)distributions on $N$ bins, where each $p_i$ is $(1 + \eps)/N$
with probability $n/(2N)$, $(1 - \eps)/N$ with probability $n/(2N)$, and $0$ otherwise.

\subsection{Notation} \label{ssec:notation}
Let $\Dom$ denote the unknown discrete domain.
Each probability distribution over $\Dom$ can be associated with
a probability mass function $p: \Dom \rightarrow \R_+$ such that $\sum_{i \in \Dom} p_i =1$.
We will use $p_i$, instead of $p(i)$, to denote the probability of element $i \in \Dom$ in $p$.
For a distribution (with mass function) $p$ and a set $S \subseteq \Dom$,
we denote by $p(S) \eqdef \sum_{i \in S} p_i$ and by
$(p|S)$ the conditional distribution of $p$ on $S$.
For $r \ge 1$, the $\ell_r$-norm of a function $p: \Dom \to \R$ is
$\|p\|_r \eqdef \left(\sum_{i \in \Dom} |p_i|^r\right)^{1/r}$.
For convenience, we will denote $\F_r(p) \eqdef  \|p\|_r^r  = \sum_{i \in \Dom} |p_i|^r$.
For $\emptyset \neq S \subseteq \Dom$, let $\U_S$ be the uniform distribution over $S$.
Let $\C  \eqdef \{\U_S: \emptyset \neq S \subseteq \Dom\}$
be the set of uniform distributions over subsets of $\Dom$.
The total variation distance between distributions $p, q$ on $\Dom$ is defined
as $\dtv(p, q) \eqdef \max_{S \subseteq \Dom} |p(S) - q(S)| = (1/2) \cdot \|p-q\|_1$.
Finally, we denote by $\mathrm{Poi}(\lambda)$ the Poisson distribution
with parameter $\lambda$.

\section{Generalized Uniformity Tester} \label{sec:upper}

In this section, we give our sample-optimal generalized uniformity tester,
\textsc{Gen-Uniformity-Test}. Before we describe our algorithm, we summarize a few preliminary
results on estimating the power sums $\F_r(p)  = \sum_{i \in \Dom} |p_i|^r$ of an unknown distribution $p$.
We present these results in Section~\ref{ssec:power-sum-estimation}.
In Section~\ref{ssec:pseudocode}, we give a detailed pseudo-code for our algorithm.
In Section~\ref{ssec:sample-comp}, we analyze the sample complexity, and in
Section~\ref{ssec:correctness} we provide the proof of correctness.

\subsection{Estimating the Power Sums of a Discrete Distribution} \label{ssec:power-sum-estimation}

We will require various notions of approximation for the power sums of a discrete distribution.
We start with the following fact:

\begin{fact}[\cite{AOST17}] \label{fact:collisions-moments}
Let $p$ be a probability distribution on an unknown discrete domain.
For any $r \geq 1$, there exists an estimator $\wh{\F}_r(p)$ for $\F_r(p)$
that draws $\Poi(m)$ samples from $p$ and satisfies the following:
$\E\left[\wh{\F}_r(p)\right] = \F_r(p)$  and $\Var\left[\wh{\F}_r(p) \right] = m^{-2r} \sum_{t=0}^{r-1} m^{r+t} \binom{r}{t} r^{r-t} \F_{r+t}(p)$.
\end{fact}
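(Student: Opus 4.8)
The plan is to prove this by Poissonization together with the classical falling-factorial (``polynomial'') estimator. First I would note that drawing $\Poi(m)$ samples from $p$ and letting $N_i$ be the number of occurrences of element $i\in\Dom$ makes the $N_i$ mutually independent with $N_i\sim\Poi(mp_i)$; since only finitely many $N_i$ are nonzero, any statistic supported on the observed elements is computable from the sample. I would then define
\[
\wh{\F}_r(p)\eqdef \frac{1}{m^r}\sum_{i\in\Dom} N_i^{\underline{r}},\qquad N_i^{\underline{r}}\eqdef N_i(N_i-1)\cdots(N_i-r+1),
\]
with the convention $0^{\underline{r}}=0$, so the sum effectively ranges over observed elements only.

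For unbiasedness, the key input is the Poisson factorial-moment identity: for $N\sim\Poi(\lambda)$ and any integer $s\ge 0$ one has $\E[N^{\underline{s}}]=\lambda^{s}$, which is immediate from the generating function $\E[z^N]=e^{\lambda(z-1)}$ by differentiating $s$ times at $z=1$. Applying this with $s=r$ and $\lambda=mp_i$ gives $\E[N_i^{\underline{r}}]=(mp_i)^r$, and summing over $i$ yields $\E[\wh{\F}_r(p)]=\sum_{i}p_i^r=\F_r(p)$. For the variance, independence of the $N_i$ reduces the task to a per-coordinate computation: $\Var[\wh{\F}_r(p)]=m^{-2r}\sum_i\Var[N_i^{\underline{r}}]$, so it suffices to evaluate $\Var[N^{\underline{r}}]$ for $N\sim\Poi(\lambda)$. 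Here I would invoke the Vandermonde-type identity for products of falling factorials,
\[
\big(x^{\underline{r}}\big)^2=\sum_{k=0}^{r}\binom{r}{k}^2 k!\; x^{\underline{2r-k}},
\]
take expectations under $\Poi(\lambda)$ using $\E[N^{\underline{2r-k}}]=\lambda^{2r-k}$, and subtract $(\E[N^{\underline{r}}])^2=\lambda^{2r}$, which is exactly the $k=0$ term. This leaves $\Var[N^{\underline{r}}]=\sum_{k=1}^{r}\binom{r}{k}^2 k!\,\lambda^{2r-k}$.

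It then remains to reindex and bound the coefficients: setting $t=r-k$ (so $t$ runs over $0,\dots,r-1$), substituting $\lambda=mp_i$, summing over $i$, and dividing by $m^{2r}$ produces $\Var[\wh{\F}_r(p)]=m^{-2r}\sum_{t=0}^{r-1}\binom{r}{t}^2(r-t)!\,m^{r+t}\F_{r+t}(p)$, and the elementary estimate $\binom{r}{t}^2(r-t)!=\binom{r}{t}\cdot r!/t!\le \binom{r}{t}\, r^{\,r-t}$ gives the stated form. The one genuinely non-routine ingredient is the product-of-falling-factorials identity above; I would justify it either by a short double-counting argument (evaluate both sides at an integer $x$ and organize ordered pairs of injections $[r]\hookrightarrow[x]$ by the size $k$ of the overlap of their images, gluing overlapping coordinates via the $k!\binom{r}{k}^2$ matchings) or by citing it from the finite-difference/umbral calculus literature. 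Everything else—Poissonization, the Poisson factorial-moment identity, the independence decomposition of the variance, and the coefficient bookkeeping—is mechanical, so this identity is the only place where any care is needed.
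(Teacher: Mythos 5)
Your proof is correct, and it is the standard argument behind the estimator that the paper merely describes and delegates to~\cite{AOST17}: Poissonize so that the occupancy counts $N_i\sim\Poi(mp_i)$ are independent, use $\E[N^{\underline{s}}]=\lambda^{s}$ for unbiasedness, and compute the variance via the falling-factorial product identity $x^{\underline{r}}x^{\underline{r}}=\sum_{k}\binom{r}{k}^2k!\,x^{\underline{2r-k}}$; your estimator $m^{-r}\sum_i N_i^{\underline{r}}$ is exactly the (ordered) $r$-wise collision count normalized by $m^r$, as stated after the Fact. One point worth flagging: your derivation yields the \emph{exact} variance $m^{-2r}\sum_{t=0}^{r-1}m^{r+t}\binom{r}{t}^2(r-t)!\,\F_{r+t}(p)$, and since $\binom{r}{t}(r-t)!=r!/t!\leq r^{r-t}$ this is only an \emph{upper bound} by the displayed expression (e.g.\ for $r=2$, $t=0$ the exact coefficient is $2$ versus the stated $4$), so the ``$=$'' in the Fact should really be ``$\leq$''. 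This discrepancy is harmless for the paper, which only ever uses the expression as an upper bound on the variance (via Chebyshev), but your write-up should state the final step as an inequality rather than an equality.
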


The estimator $\wh{\F}_r(p)$ is standard: It draws $\Poi(m)$ samples from $p$
and $m^r \cdot \wh{\F}_r(p)$ equals the number of $r$-wise collisions, i.e., ordered $r$-tuples
of samples that land in the same bin.
Using Fact~\ref{fact:collisions-moments}, we get
the following lemma which will be crucial for our generalized uniformity tester:
\begin{lemma} \label{lem:power-sum}
Let $p$ be a probability distribution on an unknown discrete domain and $r \geq 1$.
We have the following:
\begin{itemize}
\item[(i)] There exists an algorithm that, given a parameter $0< \delta <1$ and sample access to $p$,
draws $O(\frac{1}{\delta^2 \|p\|_r})$ samples from $p$ in expectation and outputs an estimate
$\wh{\gamma}_r$ that with probability at least $19/20$ satisfies: $|\wh{\gamma}_r - \F_r(p)| \leq \delta \cdot \F_r(p)$.

\item[(ii)] For any $c\geq 1$, there exist an algorithm that draws $\Poi\left(O(m)\right)$ samples from $p$ and
correctly distinguishes with probability at least $19/20$ between the case that
$m^r \F_r(p) \geq 20c$ versus $m^r \F_r(p) \leq c/20$.
\end{itemize}
\end{lemma}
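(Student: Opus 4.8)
The plan is to build both parts on top of Fact~\ref{fact:collisions-moments}, which gives an unbiased estimator $\wh{\F}_r(p)$ with a known variance formula in terms of the higher power sums $\F_{r+t}(p)$. The key inequality I would first establish is a bound on that variance purely in terms of $\F_r(p)$ and $m$: since $p$ is a probability distribution, each $\F_{r+t}(p)$ is bounded above by $\F_r(p) \cdot \|p\|_\infty^t \le \F_r(p)$, but that is too lossy; instead I would use the sharper bound $\F_{r+t}(p) \le \F_r(p)^{(r+t)/r}$ (by monotonicity of $\ell_s$-norms in $s$, i.e. $\|p\|_{r+t}\le\|p\|_r$ raised to the appropriate power). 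Plugging this into the variance expression from Fact~\ref{fact:collisions-moments} and writing $\mu = m^r \F_r(p)$, each term $m^{-2r} m^{r+t}\binom{r}{t}r^{r-t}\F_{r+t}(p)$ is at most $\binom{r}{t} r^{r-t} \mu^{1+t/r} m^{-t}$, and one checks this is $O_r(\mu + \mu^{2-1/r})$, so $\Var[\wh{\F}_r(p)] = O_r\!\left(\F_r(p)^2/\mu + \F_r(p)^2/\mu^{1/r}\right)$. The point is that the variance is small relative to $\E[\wh{\F}_r(p)]^2 = \F_r(p)^2$ as soon as $\mu = m^r\F_r(p)$ is a large enough constant (for a constant-factor estimate), or a large enough multiple of $1/\delta^2$ raised to a power depending on $r$.

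For part~(ii), this is now almost immediate: choose $m$ so that the target threshold $c$ is a fixed large constant times the variance-control parameter; then with $\Poi(m)$ samples, Chebyshev's inequality shows that $\wh{\F}_r(p)$ concentrates to within a factor of, say, $2$ of $\F_r(p)$ with probability $19/20$ whenever $m^r\F_r(p)$ is between $c/20$ and $20c$ — wait, more precisely, in the case $m^r\F_r(p)\ge 20c$ the estimator $m^r\wh{\F}_r(p)$ exceeds, say, $2c$ with probability $\ge 19/20$, and in the case $m^r\F_r(p)\le c/20$ it falls below $2c$ with probability $\ge 19/20$; the estimator $m^r\wh{\F}_r(p)$ is exactly the (normalized) $r$-wise collision count, so we just threshold it at $2c$. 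The constant hidden in $\Poi(O(m))$ absorbs the dependence on $r$ and on the gap factor $20$.

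For part~(i), the difficulty is that we are not told $\|p\|_r$ in advance, so we cannot directly pick the right $m = \Theta(1/(\delta^2\|p\|_r)^{?})$; moreover the naive Chebyshev bound gives a sample complexity of order $1/(\delta^2\|p\|_r)^{1/r}$-ish from the $\mu^{1/r}$ term, not the claimed $1/(\delta^2\|p\|_r)$. I would handle the unknown-norm issue by a standard doubling/guessing search: run part~(ii) (or a direct version of it) with geometrically increasing values of $m$ to locate, up to a constant factor, the scale at which $m^r\F_r(p) = \Theta(1)$, i.e. $m = \Theta(\F_r(p)^{-1/r}) = \Theta(1/\|p\|_r)$; this costs $O(1/\|p\|_r)$ samples in expectation (the geometric sum is dominated by its last term). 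Then, knowing $\|p\|_r$ to within a constant, I would run $\wh{\F}_r$ with $m' = \Theta(m/\delta^2)$, i.e. $m' = \Theta(1/(\delta^2\|p\|_r))$ samples, so that $\mu' = (m')^r\F_r(p) = \Theta(1/\delta^{2r})$ — wait, I need to be careful here: I want the relative error $\delta$, so I need $\Var/\F_r(p)^2 \le \delta^2/\mathrm{const}$, and the dominant variance term is $\F_r(p)^2/\mu^{1/r}$, requiring $\mu^{1/r}\gtrsim 1/\delta^2$, i.e. $\mu \gtrsim \delta^{-2r}$, i.e. $(m')^r \F_r(p) \gtrsim \delta^{-2r}$, i.e. $m' \gtrsim \F_r(p)^{-1/r}\delta^{-2} = \Theta(1/(\delta^2\|p\|_r))$. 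Good — the exponents work out exactly to the claimed bound. So the main obstacle, and the step I'd be most careful about, is the variance estimate: getting the bound $\F_{r+t}(p)\le\F_r(p)^{(r+t)/r}$ and then checking that after substitution the leading-order term in $\delta$ is precisely the one that yields $m' = \Theta(1/(\delta^2\|p\|_r))$ rather than something worse. The union bound over the $O(\log(1/\|p\|_r))$-ish search steps plus the final estimate needs the per-step failure probability driven below the stated $1/20$, which costs only a constant factor and does not affect the sample complexity order.
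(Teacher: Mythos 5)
Your proposal is correct and follows essentially the same route as the paper: both rest on the unbiased collision estimator of Fact~\ref{fact:collisions-moments}, bound its variance so that $m = O(1/(\delta^2\|p\|_r))$ samples give a $(1\pm\delta)$-approximation (the paper outsources this to~\cite{AOST17}, whereas you derive it from $\F_{r+t}(p)\le\F_r(p)^{(r+t)/r}$), and handle the unknown norm with a preliminary constant-factor estimate of $\|p\|_r$ costing $O(1/\|p\|_r)$ samples in expectation. The only cosmetic differences are that the paper calibrates by counting samples until the first $r$-wise collision rather than by a doubling search, and proves the ``small'' branch of part~(ii) by Markov's inequality rather than Chebyshev's.
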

\begin{proof}
Using Fact~\ref{fact:collisions-moments},
it is shown in~\cite{AOST17} that if we draw $m = O(\frac{1}{\delta^2 \|p\|_r})$ samples from $p$,
then with high constant probability we have that $|\wh{\F}_r(p) - \F_r(p)| \leq \delta \cdot \F_r(p)$.
Since the value of $\|p\|_r$ is unknown, this guarantee does not quite suffice for (i).
We instead start by approximating $1/\|p\|^r_r$ within a constant factor.
We do this by counting the number of samples we need to draw from $p$
until we see the first $r$-wise collision. By Fact~\ref{fact:collisions-moments} and
Chebyshev's inequality, this gives a constant factor approximation to $1/\|p\|^r_r$
with expected sample size of $O(1/\|p\|_r)$. We thus get (i).

We now proceed to show (ii). The algorithm is straightforward: Draw $\Poi\left(O(m)\right)$ samples from $p$ and
calculate $\wh{\F}_r(p)$. If $m^r \wh{\F}_r(p) > c$, output ``large''; otherwise output ``small''.
Suppose that $m^r \F_r(p) \leq c/20$. By Markov's inequality, with probability at least $19/20$ we will have that
$m^r  \wh{\F}_r(p) \leq c$, in which case we output ``small''.  Now suppose that $m^r \F_r(p) \geq 20c$.
Since $c \geq 1$, this gives that $\|p\|_r \geq 1/m$. Therefore, after we draw $\Poi(O(m))$ samples from $p$,
with probability at least $19/20$ we have that $\wh{\F}_r(p)$ is a factor $2$ approximation to $\F_r(p)$.
In other words, $m^r \wh{\F}_r(p) \geq 10c$ and the algorithm outputs ``large''.
\end{proof}

\subsection{Pseudo-code for \textsc{Gen-Uniformity-Test} Algorithm} \label{ssec:pseudocode}

The algorithm is given in the following pseudo-code:

\begin{algorithm}
\begin{algorithmic}[1]
\Procedure{Gen-Uniformity-Test}{$p, \eps$}

\INPUT  Sample access to arbitrary distribution $p$ on unknown discrete domain $\Dom$ and $\eps>0.$
\OUTPUT  ``YES'' with probability $2/3$ if $p \in \C$, ``NO'' with probability $2/3$ if $\dtv(p, \C) \geq \eps$.

\State  \label{step:F2:rough-estimation} Compute an estimate $\wh{\gamma}_2$ satisfying
           $|\wh{\gamma}_2 - \F_2(p)| \leq (1/2) \cdot \F_2(p)$ with probability $19/20$.

\State \label{step:n-defn} $n  \leftarrow  \lceil 2 / \gamma_2 \rceil$.

\If {($\eps \geq n^{-1/4})$}

\State  \label{step:F3:rough-estimation} Compute an estimate $\wh{\gamma}_3$ satisfying
           $|\wh{\gamma}_3 - \F_3(p)| \leq (1/2) \cdot \F_3(p)$ with probability $19/20$.

\If {($\wh{\gamma}_3 \geq 8/n^2$ or $\wh{\gamma}_3 \leq 1/(8n^2)$)} \label{step:check-F3} 
\textbf{return} ``NO''.
\EndIf

\State \label{sample-large-eps} Let $m \leftarrow \Theta(n^{2/3}/\eps^{4/3})$, for a sufficiently large constant in the $\Theta()$.

\State Let $c_4 = \Theta(1+m^4/n^3)$, for a sufficiently large constant in the $\Theta()$.
\State \label{step:F4-samples} Draw $\Poi(O(m))$ samples from $p$ and let $\wh{\gamma}_4$ denote the value of $\wh{\F}_4(p)$ on this sample.
\If {$m^4 \wh{\gamma}_4 > 20c_4$} \label{step:F4:upper-bound}
\textbf{return} ``NO''.
\EndIf

\State Let $c_5 = \Theta(1+m^5/n^4)$, for a sufficiently large constant in the $\Theta()$.
\State  \label{step:F5-samples} Draw $\Poi(O(m))$ samples from $p$ and let $\wh{\gamma}_5$ denote the value of $\wh{\F}_5(p)$ on this sample.
\If {$m^5 \wh{\gamma}_5 > 20c_5$} \label{step:F5:upper-bound}  \textbf{return} ``NO''.
\EndIf

\State  \label{step:F2-F3-strong-approx} Compute the estimates $\wh{\F}_2(p)$, $\wh{\F}_3(p)$
           on two separate sets of $\Poi(m)$ samples.

\If {$\left(\wh{\F}_3(p) - \wh{\F}_2(p)^2 \leq \eps^2/(300n^2) \right)$}
\textbf{return} ``YES''.
\Else  { \textbf{return} ``NO''.}
\EndIf
\EndIf

\If {($n^{-1/4} \log^{-1}(n) \leq \eps < n^{-1/4}$)}
     \State Let $m_1 \leftarrow \Theta(n)$, for an appropriately large constant in the $\Theta()$.

\State \label{step:find-s} Draw $\Poi(m_1)$ samples from $p$. Let $S$ be the subset of $\Dom$ that appears in the sample.

\State \label{step:no-heavy} Verify the following conditions: (i) Each $i \in S$ appears $O(\log n)$ times;
\State  \label{step:support-mass} (ii) $|S| \geq n/2$; (iii) $p(S) \geq 1/2$.
\If {(any of conditions~(\ref{step:no-heavy}),~(\ref{step:support-mass}) is violated)}
\textbf{return} ``NO''. \label{step:violate}
\EndIf
\State Using rejection sampling, draw $m_2 \leftarrow O(n^{1/2}/\eps^2)$ samples from $(p|S)$.
\State Test whether $(p|S) = \U_S$ versus $\eps/10$-far from $\U_S$
with confidence probability $19/20$. \label{step:uniform-restriction}
\State \textbf{return} the answer of the tester in Step~\ref{step:uniform-restriction}.
\EndIf

\If {($\eps < n^{-1/4} \log^{-1}(n)$)}
\State $m_1 \leftarrow \Theta(n \log n)$, for an appropriately large constant in the $\Theta()$.
     \State \label{step:find-S} Draw $\Poi(m_1)$ samples from $p$. Let $S$ be the subset of $\Dom$ that appears in the sample.
\State Draw $m_2 \leftarrow O(n^{1/2}/\eps^2)$ samples from $p$.
\If {(any of the above samples lands outside $S$)} \label{step:test-small-eps}
\textbf{return} ``NO''.
\EndIf
\State Test whether $(p|S) = \U_S$ versus $\eps/2$-far from $\U_S$
with confidence probability $19/20$. \label{step:uniform-restriction2}
\State \textbf{return} the answer of the tester in Step~\ref{step:uniform-restriction2}.
\EndIf

\EndProcedure
\end{algorithmic}
\caption{Sample-Optimal Algorithm for Generalized Uniformity Testing}
\label{alg:gen-unif-test}
\end{algorithm}


\subsection{Bounding the Sample Complexity} \label{ssec:sample-comp}
We start by analyzing the sample complexity of the algorithm.
We claim that the expected sample complexity is 
$O\left(1/\big(\eps^{4/3} \|p\|_3\big) \right)$ for $\eps \geq n^{-1/4}$ and
$O\left(1/\big(\eps^{2} \|p\|_2\big) \right)$ for $\eps < n^{-1/4}$.

By Lemma~\ref{lem:power-sum} (i), 
Step~\ref{step:F2:rough-estimation} can be implemented with expected sample complexity $O(1/\|p\|_2)$
and Step \ref{step:F3:rough-estimation} with expected sample complexity $O(1/\|p\|_3)$.

We start with the case $\eps \geq n^{-1/4}$.
If Steps~\ref{step:F2:rough-estimation},~\ref{step:F3:rough-estimation}, and~\ref{step:check-F3} succeed,
then we have that $\F_2(p) = \Theta(1/n)$ and $\F_3(p) = \Theta(1/n^2)$. Also note that no further steps
are executed unless the condition of Step ~\ref{step:check-F3} holds.
Note that all subsequent steps that draw samples 
(Steps~\ref{step:F4-samples},~\ref{step:F5-samples}, and~\ref{step:F2-F3-strong-approx}) 
by definition use at most $\Poi(O(m))$ additional samples.
Since Step~\ref{step:F2-F3-strong-approx} is executed only if $\wh{\gamma}_3 = \Theta(1/n^2)$, 
we have that $m = O(\wh{\gamma}_3^{-1/3}/\eps^{4/3}) = O(1/(\eps^{4/3}\|p\|_3))$.
Therefore, for $\eps \geq n^{-1/4}$,
the expected sample complexity of the algorithm is bounded by
$$O\left(1/\|p\|_2\right) + O\left(1/\|p\|_3\right) + O\left(1/\big(\eps^{4/3} \|p\|_3\big) \right) = O\left(1/\big(\eps^{4/3} \|p\|_3\big) \right) \;.$$

For the case of $n^{-1/4} \log^{-1} (n) \leq \eps < n^{-1/4}$, the additional sample size drawn on top of Step~\ref{step:F2:rough-estimation}
is $O(n +n^{1/2}/\eps^2) = O(n^{1/2}/\eps^2)$. Since $n = \Theta (1/\|p\|^2_2)$, the total sample complexity in this case is
$$O\left(1/\|p\|_2\right) +  O\left(1/\big(\eps^{2} \|p\|_2\big) \right) = O\left(1/\big(\eps^{2} \|p\|_2\big) \right) \;.$$
Finally, for  $\eps < n^{-1/4} \log^{-1} (n)$, the sample size drawn on top of Step~\ref{step:F2:rough-estimation}
is $O(n \log n +n^{1/2}/\eps^2) = O(n^{1/2}/\eps^2)$. Since $n = \Theta (1/\|p\|^2_2)$, the total sample complexity in this case is
$O\left(1/\big(\eps^{2} \|p\|_2\big) \right)$, as before. This completes the analysis of the sample complexity.


\subsection{Correctness Proof} \label{ssec:correctness}

This section is devoted to the correctness proof of \textsc{Gen-Uniformity-Test}.
In particular, we will show that if $p \in \C$, the algorithm outputs ``YES''
with probability at least $2/3$ (completeness); and if $\dtv(p, \C) \geq \eps$,
the algorithm outputs ``NO'' with probability at least $2/3$ (soundness).

We start with the following simple claim giving
a useful condition for the soundness case:
\begin{claim} \label{claim:far-uniform}
If  $\dtv(p, \C) \geq \eps$, then for all $x \in [0, 1]$ we
have that $\sum_{i \in \Dom} \min\{ p_i, |x - p_i| \}  \geq \eps/2.$
\end{claim}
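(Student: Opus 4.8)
The plan is to fix an arbitrary $x \in [0,1]$ and exhibit an explicit uniform distribution $\U_S \in \C$ whose total variation distance from $p$ is controlled by the quantity $\sum_{i} \min\{p_i, |x - p_i|\}$; the hypothesis $\dtv(p,\C) \ge \eps$ then forces that quantity to be at least $\eps/2$. The natural candidate is $S = S_x \eqdef \{ i \in \Dom : p_i \ge x/2 \}$ (or some similar threshold set cut at a level comparable to $x$), so that $\U_S$ assigns mass $1/|S|$ to each $i \in S$. Intuitively, a bin $i$ with $p_i$ close to $x$ should be ``in'' $S$ and contributes roughly $|p_i - x|$ to the $\ell_1$ error, while a bin with $p_i$ small compared to $x$ should be ``out'' of $S$ and contributes roughly $p_i$; in either case its contribution is at most (a constant times) $\min\{p_i, |x - p_i|\}$.

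The key steps, in order, would be: (1) choose the threshold defining $S = S_x$ and note $S \neq \emptyset$ unless $\sum_i \min\{p_i,|x-p_i|\}$ is already large, handling that degenerate case separately; (2) write $\dtv(p, \U_S) = \tfrac12 \|p - \U_S\|_1 = \tfrac12\big( \sum_{i \notin S} p_i + \sum_{i \in S} |p_i - 1/|S|| \big)$; (3) bound $\sum_{i \notin S} p_i \le \sum_{i \notin S} \min\{p_i, |x-p_i|\}$ directly from the definition of $S$ (bins out of $S$ have $p_i$ small, so $p_i \le |x - p_i|$ up to the threshold constant); (4) for $i \in S$, compare $1/|S|$ with $x$: since $S$ consists of the bins with $p_i$ above a level $\asymp x$, one has $|S| \cdot x \lesssim \sum_{i\in S} p_i \le 1$, which pins $1/|S|$ to within a constant factor of $x$ (from above), and combine $|p_i - 1/|S|| \le |p_i - x| + |x - 1/|S||$ with a summation of the second term over $i \in S$ to convert it back into the same sum; (5) add up and absorb constants, then choose the threshold constant so that the final bound reads $\dtv(p,\U_S) \le \sum_i \min\{p_i,|x-p_i|\}$ — giving $\sum_i \min\{p_i,|x-p_i|\} \ge \dtv(p,\C) \ge \eps$, which is even stronger than the claimed $\eps/2$, so there is comfortable slack.

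The main obstacle I anticipate is step (4): controlling $1/|S|$ versus $x$ when $\sum_{i \in S} p_i$ is much smaller than $1$ (so $\U_S$ has much larger per-bin mass than $p$ does), and more delicately, the regime $x$ very small or very large where the threshold set behaves badly — e.g.\ $x$ so small that $S = \Dom$ and $1/|S|$ need not be near $x$ at all, or $x$ so large that $S$ is tiny. In those boundary regimes one should instead argue directly that $\dtv(p,\C)$ is small is impossible: if $x$ is tiny then $\sum_i \min\{p_i,|x-p_i|\} = \sum_i \min\{p_i, p_i - x \text{ or } x\}$ is easily $\ge \eps/2$ unless $p$ is nearly a scaled indicator, in which case $p$ is close to some $\U_S$ and the hypothesis is contradicted; symmetric reasoning handles large $x$. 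Getting a single clean threshold (or a short case split on the size of $x$ relative to $\min_i p_i$ and $\max_i p_i$) that makes all regimes go through uniformly is the part that needs care; the rest is routine triangle-inequality bookkeeping.
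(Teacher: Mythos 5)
Your skeleton is the same as the paper's: threshold set $S=\{i: p_i> x/2\}$, the observation that $\delta\eqdef\sum_i\min\{p_i,|x-p_i|\}=\sum_{i\notin S}p_i+\sum_{i\in S}|p_i-x|$, and a triangle inequality through the ``level-$x$'' object to upper-bound the distance from $p$ to $\U_S$. But your step (4) contains a genuine gap, exactly where you say the argument ``needs care.'' The assertion that $|S|\cdot x\lesssim 1$ pins $1/|S|$ to within a constant factor of $x$ is false: it only gives $1/|S|\gtrsim x/2$, and $1/|S|$ can exceed $x$ by an arbitrary factor (take $p$ supported on a single bin and $x=10^{-3}$; then $S$ is a singleton and $1/|S|=1$). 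What you actually need is a bound on the summed error $\sum_{i\in S}\bigl|x-1/|S|\bigr|=\bigl|\,|S|x-1\bigr|$, and the correct bound is not that this is $O(1)$ or small in absolute terms --- it is that it is at most $\delta$ itself. That follows from the one line your plan is missing: let $c$ be the pseudo-distribution equal to $x$ on $S$ and $0$ elsewhere, so $\|c\|_1=|S|x$ and
$\bigl|\,|S|x-1\bigr|=\bigl|\|c\|_1-\|p\|_1\bigr|\le\|p-c\|_1=\delta$.
Then $\|p-\U_S\|_1\le\|p-c\|_1+\|c-\U_S\|_1\le 2\delta$, and since $\dtv(p,\U_S)\ge\eps$ you get $\delta\ge\eps/2$ for every $x$ at once, with no case split on the size of $x$ (the only degenerate case is $S=\emptyset$, where $\delta=\sum_i p_i=1$).

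Your proposed fallback for the boundary regimes does not close this gap either. The claim that for tiny $x$ the sum is ``easily $\ge\eps/2$ unless $p$ is nearly a scaled indicator'' is not right: $\delta$ is small whenever $p$ is close to a function that is constantly $x$ on its support (e.g.\ $x=1/n$ with $p$ near uniform on $n$ bins), and making rigorous the statement ``unless $p$ is close to such a function, in which case $p$ is close to some $\U_S$'' is precisely the content of the claim being proved, so as sketched the argument is circular. The mass-comparison inequality above is the single idea that replaces all of the regime analysis.
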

\begin{proof}
Let $S_{h}$ be the set of $i \in \Dom$ on which $p_i>x/2$. 
Let $\delta = \sum_{i \in \Dom} \min\{ p_i, |x - p_i| \}.$ 
Note that $\delta=\|p-c_{x,S_h}\|_1$, 
where $c_{x,S_h}$ is the pseudo-distribution 
that is $x$ on $S_h$ on $0$ elsewhere. 
If $\|c_{x,S_h}\|_1$ were $1$, $c_{x,S_h}$ would be the uniform distribution 
$\U_{S_h}$ and we would have $\delta\geq \eps$. 
However, this need not be the case. 
That said, it is easy to see that 
$\|\U_{S_h}-c_{x,S_h}\|_1 = |1-\|c_{x,S_h}\|_1| \leq \|p-c_{x,S_h}\|_1 = \delta$. 
Therefore, by the triangle inequality
$$
2\delta \geq \|p-c_{x,S_h}\|_1 + \|\U_{S_h}-c_{x,S_h}\|_1 \geq \|p-\U_{S_h}\|_1 \geq \eps \;.
$$
This completes the proof of Claim~\ref{claim:far-uniform}.
\end{proof}

\noindent We now proceed to analyze correctness for the various ranges of $\eps.$

\medskip

\noindent {\bf Case I:  [$\eps \geq n^{-1/4}$]}. 
The following structural lemma provides a reformulation of generalized uniformity testing
in terms of the second and third norms of the unknown distribution:

\begin{lemma} \label{lem:struct-unif}
We have the following:
\begin{itemize}
\item[(i)] If $p \in \C$, then $\F_3(p) = \F_2^2(p)$.
\item[(ii)] If $\dtv(p, \C) \geq \eps$, then $\F_3(p) - \F_2^2(p) > \eps^2 \F_2^2(p)/64$.
\end{itemize}
\end{lemma}
\begin{proof}
The proof of (i) is straightforward. Suppose that $p = \U_S$ for some $\emptyset \neq S \subseteq \Dom$.
It then follows that $\F_2(p) = 1/|S|$ and $\F_3(p) = 1/|S|^2$, yielding part (i) of the lemma.

We now proceed to prove part (ii).
Suppose that $\dtv(p, \C) \geq \eps$.
First, it will be useful to rewrite the quantity $\F_3(p) - \F_2^2(p)$ as follows:
\begin{equation} \label{eqn:F3-F2sq}
\F_3(p) - \F_2^2(p) = \sum_{i \in \Dom} p_i (p_i - \F_2(p))^2 \;.
\end{equation}
Note that (\ref{eqn:F3-F2sq}) follows from the identity $ p_i (p_i - \F_2(p))^2 = p_i^3 + p_i \F_2(p)^2  - 2p_i^2 \F_2(p)$
by summing over $i \in \Dom$.
Since $\dtv(p, \C) \geq \eps$, an application of Claim~\ref{claim:far-uniform} for $x = \F_2(p) \in [0, 1]$, gives that
$$\sum_{i \in \Dom} \min\{ p_i, |\F_2(p) - p_i| \} \geq \eps/2 \;.$$
We partition $\Dom$ into the sets $S_{l} = \{ i \in \Dom \mid p_i < \F_2(p)/2\}$
and its complement $S_{h} = \Dom \setminus S_{l}$. Note that
$\sum_{i \in \Dom} \min\{ p_i, |\F_2(p) - p_i| \} = \sum_{i \in S_{l}} p_i + \sum_{i \in S_h} |\F_2(p) - p_i| \;.$
It follows that either $\sum_{i \in S_{l}} p_i  \geq \eps/4$
or  $\sum_{i \in S_{h}} |\F_2(p)-p_i| \geq \eps/4$. We analyze each case separately.
First, suppose that $\sum_{i \in S_{l}} p_i  \geq \eps/4$. Using (\ref{eqn:F3-F2sq}) we can now write
$$\F_3(p) - \F_2^2(p) \geq \sum_{i \in S_{l}} p_i (p_i - \F_2(p))^2 > (\F_2(p)/2)^2 \cdot \sum_{i \in S_{l}} p_i  = \eps \F^2_2(p) /16  \;.$$
Now suppose that $\sum_{i \in S_{h}} |\F_2(p)-p_i| \geq \eps/4$. Note that $1 \leq |S_h| \leq 2/|\F_2(p)|$.
In this case, using (\ref{eqn:F3-F2sq}) we obtain
\begin{eqnarray*}
\F_3(p) - \F_2^2(p)
&\geq& \sum_{i \in S_{h}} p_i (p_i - \F_2(p))^2 \\
&\geq& (\F_2(p)/2) \cdot \sum_{i \in S_{h}} (p_i - \F_2(p))^2 \\
&\geq&  (\F_2(p)/2) \cdot \frac{(\sum_{i \in S_{h}} |\F_2(p)-p_i|)^2}{|S_h|} \\
&\geq&  (\F_2(p)/2)^2 \cdot (\eps/4)^2\\
&=&   \eps^2  \F^2_2(p) / 64 \;,
\end{eqnarray*}
where the second inequality uses the definition of $S_h$,
and the third inequality is Cauchy-Schwarz.
This completes the proof of Lemma~\ref{lem:struct-unif}.
\end{proof}


By Lemma~\ref{lem:struct-unif},
the proof in this case boils down to proving that our estimates for $\F_2(p)$
and $\F_3(p)$ obtained in Step~\ref{step:F2-F3-strong-approx} are sufficiently accurate
to distinguish between the completeness and soundness cases.
We note that since Steps~(\ref{step:check-F3}), (\ref{step:F4:upper-bound}), and
(\ref{step:F5:upper-bound}) have succeeded, with probability at least $19/20$ each of the corresponding conditions
is satisfied. Specifically, this implies that the following conditions hold: 
$\F_2(p) = \Theta (1/n)$, $\F_3(p) = \Theta (1/n^2)$, $\F_4(p) = O(m^{-4} + n^{-3})$, and 
$\F_5(p) = O(m^{-5} + n^{-4})$.

We henceforth condition on this event. The following lemma shows that our approximations to the second
and third moments are appropriately accurate:

\begin{lemma} \label{lem:approx-F3-F2sq}
Let $c$ be an appropriately small universal constant (selecting $c = 10^{-3}$ suffices).
With probability at least $9/10$ over the samples, the estimates for $\F_2(p)$
and $\F_3(p)$ obtained in Step~\ref{step:F2-F3-strong-approx} satisfy the following conditions:
\begin{itemize}
\item[(i)] $|\wh{\F_2}(p) - \F_2(p)| \leq  c \cdot \eps^2 \F_2(p)$.
\item[(ii)] $|\wh{\F_3}(p) - \F_3(p)| \leq  c \cdot \eps^2 \F_2^2(p)$.
\end{itemize}
\end{lemma}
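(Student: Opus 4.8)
The plan is to control the variances of the two unbiased estimators $\wh{\F}_2(p)$ and $\wh{\F}_3(p)$ from Step~\ref{step:F2-F3-strong-approx} using Fact~\ref{fact:collisions-moments}, then apply Chebyshev's inequality to each. The key point is that we have already conditioned on the event that the low-order power sums are small: $\F_2(p) = \Theta(1/n)$, $\F_3(p) = \Theta(1/n^2)$, $\F_4(p) = O(m^{-4} + n^{-3})$, and $\F_5(p) = O(m^{-5} + n^{-4})$. Since $m = \Theta(n^{2/3}/\eps^{4/3})$ and $\eps \geq n^{-1/4}$ (so $m \geq \Theta(n^{1/2})$, hence $m^{-4} \le n^{-2}$ and in fact $m^{-3}\le n^{-2}$), these bounds are exactly what is needed to beat the worst-case behavior of the estimators.

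First I would write out $\Var[\wh{\F}_2(p)] = m^{-4}(4m^3 \F_3(p) + 2m^2 \F_2(p))$ from Fact~\ref{fact:collisions-moments} with $r=2$ (the $t=0,1$ terms give $\binom{2}{0}2^2\F_2 \cdot m^2$ and $\binom{2}{1}2^1\F_3\cdot m^3$, up to the $m^{-4}$ prefactor), and plug in $\F_3(p) = \Theta(1/n^2)$, $\F_2(p) = \Theta(1/n)$: this yields $\Var[\wh{\F}_2(p)] = O(1/(m n^2) + 1/(m^2 n))$. For part~(i) I want this to be at most $(c\,\eps^2 \F_2(p))^2 \cdot \Theta(1) = \Theta(c^2 \eps^4/n^2)$; substituting $m = \Theta(n^{2/3}/\eps^{4/3})$ gives $1/(mn^2) = \Theta(\eps^{4/3}/n^{8/3})$ and $1/(m^2 n) = \Theta(\eps^{8/3}/n^{7/3})$, both of which are $O(\eps^4/n^2)$ precisely when $\eps \ge n^{-1/4}$ — so choosing the constant in $m$ large and $c$ small, Chebyshev gives the desired bound with probability $\ge 19/20$. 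Then I would do the analogous computation for $r=3$: $\Var[\wh{\F}_3(p)] = m^{-6}(m^3 \cdot 27 \F_3(p) + m^4 \cdot 27 \F_4(p) + m^5 \cdot 9 \F_5(p))$ (the $t=0,1,2$ terms with $\binom{3}{t}3^{3-t}$), and plug in $\F_3(p)=\Theta(1/n^2)$, $\F_4(p) = O(m^{-4}+n^{-3})$, $\F_5(p) = O(m^{-5}+n^{-4})$. The target for part~(ii) is $(c\eps^2 \F_2^2(p))^2 = \Theta(c^2\eps^4/n^4)$. Each resulting term — roughly $1/(m^3 n^2)$, $1/(m^2 n^3) + 1/(m^2 \cdot m^4)$, $1/(m\,n^4) + 1/(m\cdot m^5)$ — should check out as $O(\eps^4/n^4)$ using $m = \Theta(n^{2/3}/\eps^{4/3})$ and $\eps \ge n^{-1/4}$; in particular the ``$m^{-k}$'' parts of the $\F_4,\F_5$ bounds are dominated by the $n^{-k+1}$ parts in this regime, so they contribute nothing new.

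The main obstacle is purely bookkeeping: verifying that every one of the five or six error terms that appear (after expanding both variances and substituting the conditioned power-sum bounds and $m = \Theta(n^{2/3}/\eps^{4/3})$) is bounded by the respective target $\Theta(\eps^4/n^2)$ or $\Theta(\eps^4/n^4)$ exactly under the hypothesis $\eps \ge n^{-1/4}$. I would organize this as a short table of exponents rather than prose, and note at the end that the constraint $\eps \ge n^{-1/4}$ is tight for the binding terms (this is, after all, why the algorithm branches at this threshold). Once all terms are checked, a union bound over the two Chebyshev failures (each of probability $\le 1/20$) gives the combined $9/10$ success probability claimed. No step is conceptually hard; the care needed is just in not dropping a factor of $\eps$ or $n$ when expanding Fact~\ref{fact:collisions-moments}.
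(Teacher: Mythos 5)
Your proposal is correct and follows the same route as the paper's proof: expand the variances of $\wh{\F}_2(p)$ and $\wh{\F}_3(p)$ via Fact~\ref{fact:collisions-moments}, substitute the conditioned bounds $\F_3(p)=\Theta(1/n^2)$, $\F_4(p)=O(m^{-4}+n^{-3})$, $\F_5(p)=O(m^{-5}+n^{-4})$ together with $m=\Theta(n^{2/3}/\eps^{4/3})$ and $\eps\geq n^{-1/4}$, then apply Chebyshev to each and union bound. (Only a cosmetic slip: the $t=0$ term of the $r=2$ variance has coefficient $\binom{2}{0}2^2=4$, not $2$, which of course changes nothing asymptotically.)
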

\begin{proof}
The lemma follows using Fact~\ref{fact:collisions-moments} and an application of Chebyshev's inequality, 
crucially exploiting the improved variance bounds that hold when the above
conditions are satisfied.


To prove part (i), note that $\Var[\wh{\F}_2(p)] = O\left( m^{-2} \F_2(p) + m^{-1} \F_3(p)  \right)$.
We use that $\F_3(p) = \Theta (1/n^2) = \Theta (\F^2_2(p))$,
where the second inequality uses the fact that $1/n = \Theta(\F_2(p))$
(as follows from Steps~\ref{step:F2:rough-estimation} and~\ref{step:n-defn} of the algorithm).
Now recall that the sample size $m$ is defined to be $\Theta(n^{2/3}/\eps^{4/3})$, for a sufficiently
large universal constant in the big-$\Theta$.
We can therefore bound the variance $\Var[\wh{\F}_2(p)]$ from above by
$$O\left( m^{-2} n^{-1} + m^{-1} n^{-2}  \right) =  O\left( \eps^{8/3} n^{-7/3} + \eps^{4/3} n^{-8/3}   \right) = O(\eps^4/n^2) \;,$$
where we used the assumption that $\eps \geq n^{-1/4}$. By Chebyshev's inequality, we therefore get that
\begin{equation} \label{eqn:chebyshev:F2}
|\wh{\F_2}(p) - \F_2(p)| \leq O(\eps^2/n) \;,
\end{equation}
with probability at least $19/20$. By selecting the constant factor 
in the definition of $m$ appropriately, we can make the RHS in (\ref{eqn:chebyshev:F2})
at most $c \cdot \eps^2 \F_2(p)$, as desired.

Part (ii) is proved similarly. We have that
$\Var[\wh{\F}_3(p)] = O\left( m^{-3} \F_3(p) + m^{-2} \F_4(p) +  m^{-1} \F_5(p)  \right)$.
We use that $\F_3(p) = \Theta (1/n^2)$,
$\F_4(p) = O(m^{-4} + n^{-3})$, and $\F_5(p) = O(m^{-5} + n^{-4})$.
Recalling that the sample size $m$ is defined to be $\Theta(n^{2/3}/\eps^{4/3})$,
we can bound the variance $\Var[\wh{\F}_3(p)]$ from above by
$$O\left( m^{-3} n^{-2} + m^{-6} + m^{-2} n^{-3} +  m^{-1} n^{-4} \right) =  O\left(  \eps^4/n^4  \right) \;,$$
where we used the assumption that $m = \Theta(n^{2/3}/\eps^{4/3})$ and $\eps \geq n^{-1/4}$.
By Chebyshev's inequality, we therefore get that
\begin{equation} \label{eqn:chebyshev:F3}
|\wh{\F_3}(p) - \F_3(p)| \leq O(\eps^2/n^2) \;,
\end{equation}
with probability at least $19/20$. By selecting the constant in the big-$\Theta$
defining $m$ appropriately, it is clear that we can make the RHS in (\ref{eqn:chebyshev:F3})
at most $c \cdot \eps^2 \F_2^2(p)$, as desired.
This completes the proof of Lemma~\ref{lem:approx-F3-F2sq}.
\end{proof}

We now have all the necessary ingredients to establish completeness and soundness in Case I.
If $p \in \C$, it is easy to see that Steps~(\ref{step:check-F3}), (\ref{step:F4:upper-bound}), and
(\ref{step:F5:upper-bound}) succeed with high constant probability, as follows from the fact
that the norms are minimal in this case and Lemma~\ref{lem:power-sum}. Moreover, if the algorithm 
does not reject in any of these steps, the corresponding conditions on the magnitude of these
norms are satisfied. If the conditions of Lemma~\ref{lem:approx-F3-F2sq} hold, then we have that
$$\left| \left(\F_3(p) - \F_2^2(p)\right) - \left( \wh{\F}_3(p) - \wh{\F}_2(p)^2  \right)\right|  \leq c \cdot \eps^{2} \F^2_2(p) \;.$$
Therefore, the algorithm correctly distinguishes between the completeness
and soundness cases, via Lemma~\ref{lem:struct-unif}.
This completes the correctness analysis of Case I.

\bigskip\medskip

\noindent {\bf Case II:  [$n^{-1/4} \log^{-1}(n) \leq \eps < n^{-1/4}$]}. 
The correctness in the completeness case is straightforward.
If $p \in \C$, it is easy to see that Conditions~\ref{step:no-heavy} and~\ref{step:support-mass}
will be satisfied with high constant probability.
Moreover, the conditional distribution $(p|S)$ equals $\U_S$,
and therefore the overall algorithm outputs  ``YES'' with high constant probability.

The correctness of the soundness case is more involved. 
Suppose that $\dtv(p, \C) \geq \eps$.
If the algorithm does not output ``NO'' in Step~\ref{step:violate}, the following conditions hold with high probability:
(a) $|S| \geq n/2$, (b) $p(S) \geq 1/2$, and (c) $p_i  = O(\log n / n)$ for all $i \in \Dom$.
We will use these statements to prove the following lemma:

\begin{lemma} \label{lem:restriction}
If $\dtv(p, \C) \geq \eps$ and the conditions in Steps~\ref{step:no-heavy}, \ref{step:support-mass} hold, then
with high constant probability over the samples drawn in Step~\ref{step:find-s}, we have that $\dtv((p|S), \U_S) \geq \eps/10$.
\end{lemma}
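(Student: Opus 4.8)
The plan is to fix an arbitrary $x$ in an appropriate interval around $1/|S|$ and analyze the random variable $Z(x) = \sum_{i \in S} |p_i - x|$, where $S$ is the (random) support of the $\Poi(m_1)$-sample. Since $\dtv((p|S), \U_S) = \tfrac{1}{2}\|(p|S) - \U_S\|_1 = \tfrac{1}{2 p(S)} \sum_{i \in S} |p_i - p(S)/|S||$, and since $p(S) = \Omega(1)$ and $|S| = \Omega(n)$, it suffices to show that $Z(x)$ is $\Omega(\eps)$ with high constant probability for the relevant value $x = p(S)/|S|$ — or, more robustly, for every $x$ in an interval that is guaranteed to contain $p(S)/|S|$. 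So the first step is to reduce the lemma to a statement of the form ``$Z(x) = \Omega(\eps)$ with high probability, simultaneously for all $x$ in $[c_1/n, c_2/n]$'' for suitable constants, using a union bound over a fine net of $x$-values together with the $1$-Lipschitz dependence of $Z(x)$ on $x$ (a change in $x$ of $O(\eps/n)$ changes $Z(x)$ by at most $|S| \cdot O(\eps/n) = O(\eps)$, so a net of $O(n/\eps)$ points, i.e. $\mathrm{poly}(n)$ points, suffices, and each point's failure probability must be beaten down to $\mathrm{poly}(1/n)$).

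Next, for a fixed $x$, I would lower bound $\E[Z(x)]$. Element $i$ lands in $S$ independently with probability $1 - e^{-m_1 p_i} \geq 1 - e^{-\Omega(n p_i)}$; since we only care about $i$ with $p_i = O(\log n / n)$ but $m_1 = \Theta(n)$, elements with $p_i = \Omega(1/n)$ are included with constant probability, and in fact $\Pr[i \in S] \geq \min(1, \Omega(n p_i))$ up to constants. Thus $\E[Z(x)] = \sum_{i \in \Dom} |p_i - x| \cdot \Pr[i \in S] \geq \Omega(1) \cdot \sum_{i \in \Dom} \min\{1, n p_i\} \cdot |p_i - x|$. Choosing $x = \Theta(1/n)$ and comparing $\min\{1, n p_i\} |p_i - x|$ against $\min\{p_i, |p_i - x|\}$ term by term (these agree up to constant factors because $n x = \Theta(1)$: when $p_i \lesssim x$ the factor $n p_i$ is within a constant of $n p_i / (nx) $... more carefully, $\min\{1,np_i\}|p_i-x| \gtrsim \min\{p_i,|p_i-x|\}$ whenever $nx=\Theta(1)$), Claim~\ref{claim:far-uniform} gives $\sum_i \min\{p_i,|p_i-x|\} \geq \eps/2$, hence $\E[Z(x)] = \Omega(\eps)$.

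Finally I would prove concentration of $Z(x)$ around its mean. Write $Z(x) = \sum_{i \in \Dom} |p_i - x| \mathbf{1}[i \in S]$, a sum of independent random variables bounded by $|p_i - x| \leq O(\log n / n)$ each (using condition (c), $p_i = O(\log n/n)$, and $x = O(1/n)$). A Bernstein/Hoeffding bound gives deviation $O\big(\sqrt{\Var \cdot \log n} + (\log n / n)\log n\big)$; the variance is at most $\sum_i (p_i - x)^2 \Pr[i\in S] \leq \max_i|p_i-x| \cdot \E[Z(x)] = O(\log n/n)\cdot O(\eps)$... here is the crux. The main obstacle is exactly this concentration step: I need the deviation to be a small constant fraction of $\E[Z(x)] = \Omega(\eps)$, but $\eps$ can be as small as $n^{-1/4}\log^{-1} n$, while the per-term bound carries a $\log n/n$ factor — so the additive error is roughly $\sqrt{(\log n/n)\eps \log n}$, which must be $o(\eps)$, i.e. one needs $\eps \gg \log^2 n / n$, comfortably true since $\eps \geq n^{-1/4}\log^{-1}n \gg \log^2 n/n$. (Similarly the additive Bernstein term $\log^2 n/n = o(\eps)$.) So the calculation does close, but it is delicate and is where the hypotheses $\eps \geq n^{-1/4}\log^{-1} n$ and $p_i = O(\log n/n)$ are both used essentially; getting the union-bound failure probabilities to multiply out against the $\mathrm{poly}(n)$-size net is the bookkeeping I expect to be most error-prone. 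Once $Z(x) \geq \E[Z(x)]/2 = \Omega(\eps)$ holds for all net points with high constant probability, the Lipschitz argument extends it to the specific $x = p(S)/|S|$, and dividing by $2p(S) = \Theta(1)$ yields $\dtv((p|S),\U_S) = \Omega(\eps)$, which after tracking constants is $\geq \eps/10$.
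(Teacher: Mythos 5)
Your proposal is correct and follows essentially the same route as the paper's proof: reduce to showing $Z(x)=\sum_{i\in S}|p_i-x|=\Omega(\eps)$ simultaneously over a grid of $O(n/\eps)$ values of $x=\Theta(1/n)$, lower-bound $\E[Z(x)]$ via Claim~\ref{claim:far-uniform} using $\Pr[i\in S]=1-e^{-\Theta(np_i)}$, and apply a Bernstein-type lower-tail bound with second-moment bound $O(\log n/n)\cdot\E[Z(x)]$, which is where the hypotheses $p_i=O(\log n/n)$ and $\eps\geq n^{-1/4}\log^{-1}n$ enter exactly as you identified. The "delicate" concentration step you flag does close just as you compute (the paper gets failure probability $\exp(-\Omega(\eps n/\log n))=\exp(-n^{\Omega(1)})$ per grid point, comfortably beating the $\mathrm{poly}(n)$ union bound), so no gap remains.
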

\begin{proof}
Suppose that $\dtv(p, \C) \geq \eps$. We want to show that with high probability over the samples it holds
$\sum_{i \in S} \left| p_i - p(S) / |S|\right|  = \Omega(\eps)$.
The main difficulty is that the value of $p(S)$ is unknown, hence
we need a somewhat indirect argument.
By Claim~\ref{claim:far-uniform}, for all $x \in [0, 1]$ we have that
\begin{equation} \label{eqn:far-uniform2}
\sum_{i \in \Dom} \min\{ p_i,  |p_i- x| \} \geq \eps/2 \;.
\end{equation}
To show that $\sum_{i \in S} \left| p_i - p(S) / |S|\right|  = \Omega(\eps)$,
it suffices to prove that the following holds:
\begin{claim} \label{claim:x}
With probability at least $19/20$,
for all $x$ in an additive grid with step size $O(\eps/n)$ such that $0 \leq x \leq \log n / n$,  we have that
$Z(x) \eqdef \sum_{i \in S} |p_i - x|  = \Omega(\eps)$.
\end{claim}
First note that for $x > 4/n$ or $x<1/(4n)$, the above claim is satisfied automatically.
Indeed, for $x>4/n$, we have $\sum_{i \in S} |p_i - x| \geq |S| \cdot x - p(S) \geq (n/2)x-1 \geq 1$.
For $x < 1/(4n)$, we have $\sum_{i \in S} |p_i - x| \geq p(S) - |S| \cdot x \geq 1/2 - n x \geq 1/4$.

We henceforth focus on the setting where $1/(4n) \leq x \leq 4/n$.
Here we show that $\E[Z(x)]$ is large and that $Z$ is tightly concentrated around its expectation.

Let $Z_i$, $i \in \Dom$, be the indicator
of the event $i \in S$. Then, $Z(x) = \sum_{i \in \Dom}  |p_i - x| Z_i$. Note that $Z_i$ is a Bernoulli random variable
with $\E[Z_i] =1 - e^{-p_i n}$ and that the $Z_i$'s are mutually independent.
Note that
$\E[Z(x)] = \sum_{i \in \Dom} (1 - e^{-p_i n}) |p_i- x|$.
We recall the following concentration inequality for sums of non-negative random variables
(see, e.g., Exercise 2.9 in~\cite{Boucheron13}):
\begin{fact} \label{thm:chb}
Let $X_1, \ldots, X_m$ be independent non-negative random variables,
and $X \ = \littlesum_{j=1}^{m} X_j$.
Then, for any $t>0$, it holds that
$\Pr [ X \leq \E[X] - t] \le \exp \left( -t^2 /(2 \sum_{i=1}^m \E[X_i^2]) \right).$
\end{fact}
\noindent Since $Z(x) = \sum_{i \in \Dom}  |p_i - x| Z_i$ where the $Z_i$'s are independent Bernoulli random variables
with $\E[Z_i^2] = 1 - e^{-p_i n}$,
an application of Fact~\ref{thm:chb} yields that
\begin{equation} \label{eqn:chernoff}
\Pr\left[Z(x) \leq \E[Z(x)] -t\right]  \leq \exp \left( \frac{-t^2}{2 \littlesum_{i \in \Dom} (1 - e^{-p_i n})(p_i-x)^2}  \;. \right)
\end{equation}
Let $S_{l} =  \{ i \in \Dom: p_i \leq x/2 \}$ and $S_h = \Dom \setminus S_{l}$.
By  (\ref{eqn:far-uniform2}), we get that
$\sum_{i \in S_{l}} p_i + \sum_{i \in S_{h}} |x-p_i|  \geq \eps/2 \;.$
For $i \in S_l$, we have that $(1 - e^{-p_i n}) |p_i- x| \geq n \cdot p_i \cdot |x/2| = \Omega(p_i)$.
For $i \in S_h$, we have that $(1 - e^{-p_i n}) = \Omega(1)$ and therefore
$(1 - e^{-p_i n}) |p_i- x| = \Omega(1) |p_i- x|.$ We therefore get that
$\E[Z(x)] = \Omega(\eps)$.
We now bound $\littlesum_{i \in \Dom} (1 - e^{-p_i n})(p_i-x)^2$ from above
using the fact that $p_i \leq \log n / n$, for all $i \in \Omega$.
This assumption and the range of $x$ imply that
$$\littlesum_{i \in \Dom} (1 - e^{-p_i n})(p_i-x)^2 \leq O(\log n / n) \cdot \E[Z] \;.$$
So, by setting $t = \E[Z]/2$ in (\ref{eqn:chernoff}), we get that
$$ \Pr[Z(x) \leq \E[Z(x)]/2]  \leq \exp \left( -\Omega(\eps n / \log n)\right) = \exp\left(-n^{\Omega(1)}\right) \;,$$
where the last inequality follows from the range of $\eps$.
Recalling that $x$ lies in a grid of size $O(n/\eps)$, Claim~\ref{claim:x} follows by a union bound.
This completes the analysis of Case II.

\medskip

\noindent {\bf Case III:  [$\eps < n^{-1/4} \log^{-1}(n) $]}.
The correctness in this case is quite simple.
In the completeness case, conditioning on Step~\ref{step:F2:rough-estimation} succeeding,
we know that $p$ is uniform over a domain of size $O(n)$. Therefore, after $\Theta(n \log n)$ samples, 
we have seen all the elements of the domain with high probability, i.e., the set $S$ 
has $p(S) = 1$. Therefore, the conditional distribution $p|S$ is identified
with $p$, and the final tester outputs ``YES''. 
On the other hand, if $p$ is $\eps$-far from uniform. and the algorithm does not reject in Step~\ref{step:test-small-eps}, 
then it follows that $p(S) \geq 1-O(\eps/n^{1/4}) > 1-\eps/10$.
Therefore, $p|S$ should be at least $\eps/2$-far from $\U_S$ and 
the tester will output ``NO.'' This completes the proof of correctness.
\end{proof}

\section{Sample Complexity Lower Bound} \label{sec:lb}

In this section, we prove a sample size lower bound matching our algorithm \textsc{Gen-Uniformity-Test}. 
One part of the lower bound is fairly easy. In particular, it is known~\cite{Paninski:08}
that $\Omega(\sqrt{n}/\eps^2)$ samples are required to test uniformity of a distribution with a known support of size $n$. 
It is easy to see that the hard cases for this lower bound still work when $\|p\|_2 = \Theta(n^{-1/2})$. 

The other half of the lower bound is somewhat more difficult and we rely on the lower bound techniques of \cite{DK16}.
In particular, for $n>0,$ and $1/10>\eps>n^{-1/4}$ and for $N$ sufficiently large, we produce a pair of distributions 
$\mathcal{D}$ and $\mathcal{D'}$ over positive measures on $[N]$, so that:
\begin{enumerate}
\item A random sample from $\mathcal{D}$ or $\mathcal{D'}$ has total mass $\Theta(1)$ with high probability.
\item A random sample from $\mathcal{D}$ or $\mathcal{D'}$ has support of size $\Theta(n)$ with high probability.
\item \label{property:comp} A sample from $\mu\in\mathcal{D}$ has $\mu/\|\mu\|_1$ be the uniform distribution over some subset 
of $[N]$ with probability $1$.
\item \label{property:sound}  A sample from $\mu\in\mathcal{D'}$ has $\mu/||\mu\|_1$ 
be at least $\Omega(\eps)$-far from any uniform distribution with high probability.
\item \label{property:lb} Given a measure $\mu$ taking randomly from either $\mathcal{D}$ or $\mathcal{D'}$, 
no algorithm given the output of a Poisson process with intensity $k\mu$ for $k=o(\min(n^{2/3}/\eps^{4/3},n))$ 
can reliably distinguish between a $\mu$ taken from $\mathcal{D}$ and $\mu$ taken from $\mathcal{D'}$.
\end{enumerate}

Before we exhibit these families, we first discuss why the above is sufficient. 
This Poissonization technique has been used previously in various settings~\cite{VV14, DK16, WY16, DGPP17}, 
so we only provide a sketch here.
In particular, suppose that we have such families $\mathcal{D}$ and $\mathcal{D'}$, 
but that there is also an algorithm $A$ that distinguishes between a distribution $p$ being uniform 
and being $\eps$-far from uniform in $m=o(\eps^{-4/3}/\|p\|_3)$ samples. We show that we can use algorithm $A$ 
to violate property~\ref{property:lb} above. In particular, letting $p=\mu/\|\mu\|_1$ for $\mu$ a random measure taken 
from either $\mathcal{D}$ or $\mathcal{D'}$, we note that with high probability $p$ has support of size $\Theta(n)$, 
and thus $\|p\|_3 = O(n^{-2/3}).$ Therefore, $m'=o(n^{2/3}/\eps^{4/3})$ samples are sufficient to distinguish 
between $p$ being uniform and being $\Omega(\eps)$ far from uniform. However, by properties~\ref{property:comp} 
and~\ref{property:sound}, 
this is equivalent to distinguish between $\mu$ being taken from $\mathcal{D}$ and being taken from $\mathcal{D'}$. 
On the other hand, given the output of a Poisson process with intensity $Cm'\mu$, 
for $C$ a sufficiently large constant, a random $m'$ of these samples (note that there are at least $m'$ total samples with high probability) 
are distributed identically to $m'$ samples from $p$. Thus, applying $A$ to these samples distinguishes between 
$\mu$ taken from $\mathcal{D}$ and $\mu$ taken from $\mathcal{D'}$, thus contradicting property~\ref{property:lb}.

We now exhibit the families $\mathcal{D}$ and $\mathcal{D'}$. 
In both cases, we want to arrange $\mu_i:=\mu(\{i\})$ to be i.i.d. for different $i$. 
We also want it to be the case that the first and second moments of $\mu_i$ 
are the same for $\mathcal{D}$ and $\mathcal{D'}$. 
Combining this with requirements on closeness to uniform, we are led to the following definitions:

\noindent For $\mu$ taken from $\mathcal{D'}$, we let
$$
\mu_i = \begin{cases}\frac{1+\eps}{n} & \textrm{, with probability } \frac{n}{2N}\\ \frac{1-\eps}{n} & \textrm{, with probability } \frac{n}{2N}\\ 0 & \textrm{, otherwise\;.}\end{cases}
$$ 
For $\mu$ taken from $\mathcal{D}$, we let
$$
\mu_i = \begin{cases}\frac{1+\eps^2}{n} & \textrm{, with probability } \frac{n}{N(1+\eps^2)}\\ 0 & \textrm{, otherwise\;.}\end{cases}
$$ 

Note that in both cases, the average total mass is $1$, 
and it is easy to see by Chernoff bounds that the actual mass of $\mu$ is $\Theta(1)$ with high probability. 
Additionally, the support size is always $\Theta(n)$ times the total mass, and so is $\Theta(n)$ with high probability. 
For $\mu$ taken from $\mathcal{D}$, all of the $\mu_i$ are either $0$ or $\frac{1+\eps^2}{n}$, 
and thus $\mu/\|\mu\|_1$ is uniform over its support. For $\mu$ taken from $\mathcal{D'}$, 
with high probability at least a third of the bins in its support have $\mu_i = \frac{1+\eps}{n}$, 
and at least a third have $\mu_i = \frac{1-\eps}{n}$. If this is the case, then at least a constant fraction of the mass 
of $\mu/\|\mu\|_1$ comes from bins with mass off from the average mass by at least a $(1\pm \eps)$ factor, 
and this implies that $\mu/\|\mu\|_1$ is at least $\Omega(\eps)$-far from uniform.

We have thus verified 1-4. Property~\ref{property:lb} will be somewhat more difficult to prove. 
For this, let $X$ be a random $\{0,1\}$ random variable with equal probabilities. 
Let $\mu$ be chosen randomly from $\mathcal{D}$ if $X=0$, 
and randomly from $\mathcal{D'}$ if $X=1$. 
Let our Poisson process with intensity $k\mu$ return $A_i$ samples from bin $i$. 
We note that, by the same arguments as in~\cite{DK16}, 
it suffices to show that the shared information $I(X;A_1,\ldots,A_N)=o(1).$ 
In order to prove this, we note that the $A_i$ are conditionally independent on $X$, 
and thus we have that $I(X;A_1,\ldots,A_N) \leq \sum_{i=1}^N I(X;A_i) = NI(X;A_1)$. 
Thus, we need to show that $I(X;A_1)=o(1/N)$. 
For notational simplicity, we drop the subscript in $A_1$.

This boils down to an elementary but tedious calculation. 
We begin by noting that we can bound
$$
I(X;A) = \sum_{t=0}^\infty O\left(\frac{(\Pr(A=t|X=0)-\Pr(A=t|X=1))^2}{\Pr(A=t)} \right) \;.
$$
(This calculation is standard. See Fact~81 in~\cite{CDKS17} for a proof.)
We seek to bound each of these terms. 
The distribution of $A$ conditioned on $\mu_1$ is Poisson with parameter $k \mu_1$. 
Thus, the distribution of $A$ conditioned on $X$ is a mixture of two or three Poisson distributions, 
one of which is the trivial constant $0$. We start by giving explicit expressions for these probabilities.

Firstly, for the $t=0$ term, note that
$$\Pr(A=t|X=1) = 1-\frac{n}{N}\left(1-\frac{e^{-k(1+\eps)/n}+ e^{-k(1-\eps)/n}}{2}\right) \; ,$$
$$\Pr(A=t|X=0) = 1- \frac{n}{N(1+\eps^2)}(1- e^{-k(1+\eps^2)/n}) \; .$$
Note that $\Pr(A=0)$  is at least $1-n/N \geq 1/2$ and $\Pr(A=t|X=1)-\Pr(A=t|X=0) \leq n/N$. 
Thus, the contribution from this term, $\frac{(\Pr(A=0|X=0)-\Pr(A=0|X=1))^2}{\Pr(A=0)}$,
is $O(n/N)^2 = o(1/N)$.

For $ t \geq 1$, there is no contribution from $\mu_1=0$. 
We can compute the probabilities involved exactly as
$$\Pr(A=t|X=1) = \frac{n}{N}\frac{(k(1+\eps)/n)^t e^{-k(1+\eps)/n}+(k(1-\eps)/n)^t e^{-k(1-\eps)/n}}{2 t!} \;,$$
$$\Pr(A=t|X=0) = \frac{n}{N(1+\eps^2)} \frac{(k(1+\eps^2)/n)^t e^{-k(1+\eps^2)/n}}{t!} \;,$$
and obtain that $\frac{(\Pr(A=t|X=0)-\Pr(A=t|X=1))^2}{\Pr(A=t)}$ is
$$
O\left(\left(\frac{n^{1-t} k^t}{2Nt!} \right)
\frac
{\left((1+\eps)^t e^{-k(1+\eps)/n}+(1-\eps)^t e^{-k(1-\eps)/n} - 2(1+\eps^2)^{t-1} e^{-k(1+\eps^2)/n} \right)^2 }
{(1+\eps)^t e^{-k(1+\eps)/n}+(1-\eps)^t e^{-k(1-\eps)/n} + 2 (1+\eps^2)^{t-1} e^{-k(1+\eps^2)/n} }
\right) \;.
$$ 
Factoring out the $e^{-k/n}$ terms and noting that, since $k\eps/n=o(1)$, 
the denominator is $\Omega(e^{-k/n})$ yields that
$$
O\left(\left(\frac{n^{1-t} k^t e^{-k/n}}{2Nt!} \right)
\left((1+\eps)^t e^{-k(1+\eps)/n}+(1-\eps)^t e^{-k(1-\eps)/n} - 2(1+\eps^2)^{t-1} e^{-k(1+\eps^2)/n} \right)^2 
\right) \;.
$$ 
Noting that $k/n=o(1)$, we can ignore this $e^{-kn}$ term and Taylor expanding the exponentials, we have that 
\begin{align*}
& \frac{(\Pr(A=t|X=0)-\Pr(A=t|X=1))^2}{\Pr(A=t)} = \\
& O\left(\left(\frac{n^{1-t} k^t }{2Nt!} \right) \left((1+\eps)^t (1-k(1+\eps)/n)+(1-\eps)^t (1+k(1-\eps)/n)  \right. \right. \\
& -   \left. \left. 2(1+\eps^2)^{t-1} (1-k(1+\eps^2)/n) + O((k\eps/n)^2 (1+\eps)^t) \right)^2 \right) \;.
\end{align*}
We deal separately with the cases $t=1, t=2$ and $t > 2$.
For the $t=1$ term, we have
\begin{align*}
& O\left(\left(\frac{k}{N} \right) \left((1+\eps)(1-k\eps/n)+(1-\eps)(1+k\eps/n) - 2 (1-k\eps^2/n) +O((k\eps/n)^2)\right)^2 \right) \\
=& O\left(\left(\frac{k}{N} \right) O((k\eps/n)^2)^2 \right) \;.
\end{align*}
Since $k=o(n^{2/3}/\eps^{4/3})$ and $\eps > n^{-1/4}$, 
$\eps k/n = o(n^{-1/3}/\eps^{1/3}) = o(n^{-1/4})$, 
and we find that this is
\begin{align*}
& O\left(\left(\frac{k}{N} \right)o(1/n) \right)= o(1/N) \;.
\end{align*}
This appropriately bounds the contribution from this term.

When $t=2$, we have
\begin{align*}
& O\left(\left(\frac{k^2}{nN} \right) \left((1+\eps)^2 (1-k(1+\eps)/n)+(1-\eps)^2 (1-k(1-\eps)/n) \right. \right. \\
& \left. \left. - 2(1+\eps^2) (1-k(1+\eps^2)/n) + O((k\eps/n)^2) \right)^2 
\right) \;.
\end{align*}
Note that the terms without $k/n$ factors cancel out, $(1+\eps)^2 + (1-\eps)^2 - 2(1+\eps^2)=0$, yielding
$$
O(k^2/nN)(k\eps^2/n+o(n^{-1/2}))^2 = O(k^4\eps^4/n^3 N) + o(k^2/n^2 N) = o(k^3\eps^4/n^2 N) + o(1/N) = o(1/N) \;,
$$
using both $k=o(n^{2/3}/\eps^{4/3})$ and $k=o(n)$.

For $t>2$, we let $f_t(x)=(1+x)^t(1-kx/n)$.
In terms of $f_t$, we have that $\frac{(\Pr(A=t|X=0)-\Pr(A=t|X=1))^2}{\Pr(A=t)}$ is:
$$O\left(\left(\frac{n^{1-t} k^t }{2Nt!} \right)(f_t(\eps)+f_t(-\eps))/2-f_t(0)-(f_{t-1}(\eps^2)-f_{t-1}(0))+o(n^{-1/2})^2\right) \;.
$$
Using the Taylor expansion of $f_t$ in terms of its first two derivatives and $f_{t-1}$ in terms of its first, we see that
$$(f_t(\eps)+f_t(-\eps))/2-f_t(0) = \eps^2 f''_t(\xi)$$ 
and 
$$f_{t-1}(\eps^2)-f_{t-1}(0) = \eps^2 f'_{t-1}(\xi') \;,$$ 
for some $\xi \in [-\eps,\eps]$ and $\xi' \in [0,\eps^2]$.
However, the derivatives are 
$$f'_t(x)=(1+x)^{t-1}(t-(1+x+tx)k/n)$$ 
and $$f''_t(x)=(1+x)^{t-2}(t(t-1) - t(t+1)xk/n) \;,$$
and so $|f''_t(\xi)| \leq O(t^2(1+\eps)^{t-1})$ and $f'_{t-1}(\xi') \leq O(t(1+\eps^2)^{t-2})$.
Hence, the term 
$$\frac{(\Pr(A=t|X=0)-\Pr(A=t|X=1))^2}{\Pr(A=t)}$$ is at most
\begin{align*}
& O(n^{1-t}k^t/Nt!) (\eps^4 t^4 (1+\eps)^{2t-2})+o(1/n)) \\ 
& = O\left( (k^3\eps^4/n^2) (t^4(1+\eps)^2/N) (k(1+\eps)^2/n)^{t-3} / t! \right) + o\left((k/n)^{t}/(Nt!)\right)\\
& = o(1/N)t^4/t!  \;,
\end{align*}
using both $k=o(n^{2/3}/\eps^{4/3})$ and $k=o(n)$.
Since $(t+1)^4/(t+1)! \leq t^4/2t!$ for all $t \geq 4$,
even summing the above over all $t\geq 3$ still leaves $o(1/N)$.

Thus, we have that $I(X;A)=o(1/N)$, and therefore that $I(X:A_1,\ldots,A_N)=o(1)$. 
This proves that $X=0$ and $X=1$ cannot be reliably distinguished 
given $A_1,\ldots,A_N$, and thus proves property~\ref{property:lb}, 
completing the proof of our lower bound.

\section{Conclusions}
In this paper, we gave tight upper and lower bounds on the sample complexity 
of generalized uniformity testing -- a natural non-trivial generalization of uniformity testing, 
recently introduced in~\cite{BC17}.
The obvious research question is to understand the sample complexity of testing more general symmetric
properties (e.g., closeness, independence, etc.) for the regime where the domain of the underlying
distributions is discrete but unknown (of unknown size).
Is it possible to obtain sub-learning sample complexities for these problems?
And what is the optimal sample complexity for each of these tasks?
It turns out that the answer to the first question is affirmative.
These extensions require more sophisticated techniques and will appear in a forthcoming work.

\bibliographystyle{alpha}

\bibliography{allrefs}

\end{document}